\newtheorem{theorem}{Theorem}[section]
\newtheorem{lemma}[theorem]{Lemma}
\newtheorem{definition}[theorem]{Definition}
\newcommand{\be}{\begin{equation}}
\newcommand{\ee}{\end{equation}}
\newcommand{\bey}{\begin{eqnarray}}
\newcommand{\eey}{\end{eqnarray}}
\newcommand{\E}{{\mathbb E }}
\renewcommand{\P}{{\mathbb P}}
\newcommand{\eps}{\varepsilon}
\newcommand{\bw}{{\bf w}}
\newcommand{\bv}{{\bf v}}
\newcommand{\bu}{{\bf u}}
\newcommand{\bba}{{\bf a}}
\newcommand{\bb}{{\bf b}}
\newcommand{\bz}{{\bf z}}
\newcommand{\bR}{{\mathbb R}}
\newcommand{\bC}{{\mathbb C}}
\newcommand{\bN}{{\mathbb N}}
\newcommand{\tr}{\mbox{Tr}}
\newcommand{\const}{\mathrm{const}}
\newcommand{\cN}{{\cal N}}
\newcommand{\donothing}[1]{}
\begin{document}

\title{A Wegner estimate for Wigner matrices}

\author{Anna Maltsev and Benjamin Schlein\thanks{Partially supported
by an ERC Starting Grant} \\
\\
Institute of Applied Mathematics, University of Bonn, \\
Endenicher Allee 60, 53115 Bonn}

\maketitle

\begin{abstract}
In the first part of these notes, we review some of the recent developments in the study of the spectral properties of Wigner matrices. In the second part, we present a new proof of a Wegner estimate for the eigenvalues of a large class of Wigner matrices. The Wegner estimate gives an upper bound for the probability to find an eigenvalue in an interval $I$, proportional to the size $|I|$ of the interval.
\end{abstract}

\section{Introduction}

The general goal of Random Matrix Theory consists in establishing statistical properties of the eigenvalues of $N \times N$ matrices whose entries are random variables with a given probability law, in the limit of large $N$. In these notes, we will focus on so called Wigner matrices whose entries are, up to the symmetry constraints, independent and identically distributed random variables.

Wigner matrices were originally introduced by Wigner to describe the excitation spectrum of heavy nuclei. Wigner's intuition was as follows: the Hamilton operator of a complex system (such as a heavy nucleus) depends on so many degrees of freedom that it is essentially impossible to write it down precisely. Hence, it makes sense to assume the matrix elements of the Hamilton operator to be random variables, and to study properties of the spectrum which hold for most realizations of the randomness. Remarkably, it turned out that the distribution of the excitation energies of heavy nuclei is indeed well approximated by the distribution of the eigenvalues of Wigner matrices. Today, Wigner matrices have been linked to several other branches of mathematics and physics. The distribution of the eigenvalues of random Schr\"odinger operators in the metallic phase (where eigenvectors are delocalized), for example, is expected to be close to the one  observed in hermitian ensembles of Wigner matrices. Similarly, the spectrum of the Laplace operator on domains with chaotic classical trajectories is expected to share  several properties with the spectrum of real symmetric Wigner matrices.

The success of Wigner's idea can be understood as a consequence of universality. In vague terms, universality states that the distribution of the eigenvalues of disordered (or chaotic) systems depends on the underlying symmetry but is otherwise independent of further details. This concept is very general, and, from the mathematical point of view, its validity remains a mystery. Nevertheless, in the last years a lot of progress was made in the mathematical analysis of Wigner matrices and, at least in this context, the emergence of universality has been by now understood.

These notes are organized as follows. In Section \ref{sec:review} we first give the precise definition of the ensembles that we are going to study. Then, we briefly review some of the results on Wigner matrices obtained in the last few years. Finally, in Section \ref{sec:weg}, we present a new Wegner estimate for the eigenvalues of a large class of Wigner matrices.

\section{Some spectral properties of Wigner matrices}
\label{sec:review}

To simplify the presentation, we will restrict our attention to ensembles of hermitian Wigner matrices. However, most of the results that we are going to present extend also to ensembles with different symmetry (real symmetric and quaternion hermitian ensembles).

\begin{definition}\label{def} An ensemble of {\it Hermitian Wigner matrices} consists of $N \times N$ matrices $H = (h_{jk})_{1\leq j,k \leq N}$, with \[ \begin{split}  h_{jk} &= \frac{1}{\sqrt{N}} (x_{jk} + i y_{jk})  \qquad \text{for } 1 \leq j <k \leq N \\
h_{jk} & = \overline{h}_{kj} \qquad \qquad \qquad \quad\text{for } 1\leq k < j \leq N \\
h_{jj} & = \frac{1}{\sqrt{N}} x_{jj} \qquad \qquad \quad \text{for } 1 \leq j \leq N \end{split} \]
where $\{ x_{jk}, y_{jk}, x_{jj} \}_{1 \leq j < k \leq N}$ is a collection of $N^2$
independent real random variables. The (real and imaginary parts of the) off-diagonal entries $\{ x_{jk} , y_{jk} \}_{1\leq j < k \leq N}$ have a common distribution with \[ \E \, x_{jk} = 0 \qquad \text{and } \E \, x_{jk}^2 = \frac{1}{2} \, .  \]
Also the diagonal entries $\{ x_{jj} \}_{j=1}^N$ have a common distribution with $\E \, x_{jj} = 0$ and $\E \, x_{jj}^2 = 1$. For technical reasons, we also assume the entries to decay sufficiently fast at infinity, in the sense that
\[ \E \, e^{\nu |x_{ij}|^2} < \infty, \quad \text{ and }  \quad \E \, e^{\nu |x_{ii}|^2} < \infty \,  \]
for some $\nu >0$.
\end{definition}

Observe that the entries $h_{jk}$ scale, by definition, with the dimension $N$ of the matrix. We choose this scaling so that, in the limit of large $N$, all eigenvalues of $H$ remain of order one. To show that this is indeed the right scaling, consider the trace of $H^2$. On the one hand,
\begin{equation}\label{eq:trH2} \E \, \tr \, H^2 = \E \sum_{i,j=1}^N |h_{ij}|^2 = N^2 \, \E |h_{12}|^2 \end{equation}
since all entries have the same distribution. On the other hand, if $\mu_1, \dots , \mu_N$ denote the $N$ eigenvalues of $H$, we have
\[ \E \, \tr\, H^2 = \E \, \sum_{\alpha=1}^N \mu^2_\alpha. \]
If all eigenvalues are of order one in the limit $N \to \infty$, the r.h.s. is a quantity of the order $N$. Comparing with (\ref{eq:trH2}), it is clear that this is only possible if $\E \, |h_{12}|^2$ is of the order $N^{-1}$; this explains the scaling of the matrix entries introduced in Definition \ref{def}.

The best known ensemble of hermitian Wigner matrices is the so called Gaussian Unitary Ensemble (GUE) which is characterized by the further assumption that the random variables $\{ x_{jk}, y_{jk}, x_{jj} \}$ are Gaussian. It turns out that GUE is the only ensemble of hermitian Wigner matrices which is invariant w.r.t. unitary conjugation; if $H$ is a GUE matrix, also $U H U^*$ is a GUE matrix, for every fixed unitary matrix $U$. Because of the unitary invariance, for GUE it is possible to compute explicitly the joint probability density of the $N$ eigenvalues; it is given by
\begin{equation}\label{eq:GUE} p_{\text{GUE}} (\mu_1, \dots , \mu_N) = \const \cdot \prod_{i<j}^N (\mu_i - \mu_j)^2 e^{-\frac{N}{2} \sum_{j=1}^N \mu_j^2}  \, . \end{equation}
Note that GUE is the only ensemble of hermitian Wigner matrices for which an explicit expression for the joint probability density function of the $N$ eigenvalues is available.
Since we are interested here in establishing statistical properties of the spectrum of Wigner matrices which hold true independently of the specific choice of the probability law for the entries, we will not make use the expression~(\ref{eq:GUE}).

The first rigorous result in random matrix theory was the proof obtained by Wigner in \cite{W} of the convergence of the density of states (density of eigenvalues) to the semicircle law in the limit of large $N$. For $a < b$, let $\cN [a;b]$ denote the number of eigenvalues of a Wigner matrix $H$ in the interval $[a;b]$. The density of states in the interval $[a;b]$ is defined as $\cN [a;b] / N (b-a)$ (because of the scaling of the matrix entries, the typical distance between neighboring eigenvalues is of the order $1/N$; this is the reason why, in order to obtain a quantity of order one, we have to divide the density of states by $N$). In \cite{W}, Wigner proved that, for any fixed $a < b$ and $\delta >0$,
\[ \lim_{N \to \infty} \P \left( \left| \frac{\cN [a;b]}{N|b-a|} - \int_a^b \rho_{\text{sc}} (s) ds \right| \geq \delta \right) = 0 \, . \]
The limiting density of states is given by the famous semicircle law
\begin{equation}
\label{eq:sc}
\rho_{\text{sc}} (s) = \left\{ \begin{array}{ll} \frac{1}{2\pi} \sqrt{1- \frac{E^2}{4}}, \quad &\text{if } |E| \leq 2 \\ 0 \quad &\text{if } |E| > 2 \end{array} \right. \, . \end{equation}
Observe, in particular, that the semicircle law is independent of the choice of the probability law for the entries of $H$.

\subsection{Semicircle law on microscopic intervals}
\label{sec:micro}

It is important to remark that Wigner's result concerns the density of states on intervals whose size is independent of $N$. Such intervals contain, typically, a non trivial fraction of the total number of eigenvalues $N$. We say, for this reason, that these intervals are macroscopic. It seems then natural to ask what happens if one considers smaller intervals, namely intervals whose size shrinks down to zero as $N \to \infty$. These intervals will not contain order $N$ eigenvalues, so they will not be macroscopic, but as long as they contain a large number of eigenvalues in the limit of large $N$, it turns out that one still has convergence to the semicircle law. This is the content of the next theorem, which was first proven in \cite{ESY3}, using also partial results from \cite{ESY1,ESY2}.
\begin{theorem}\label{thm:micro}
Consider an ensemble of hermitian Wigner matrices as in Def. \ref{def}. Let $|E|<2$. Then
\begin{equation}\label{eq:microsc} \lim_{K \to \infty} \lim_{N \to \infty} \P \left( \left| \frac{\cN \left[E-\frac{K}{2N} ; E+\frac{K}{2N} \right]}{K} - \rho_{\text{sc}} (E) \right| \geq \delta \right) = 0 \, . \end{equation}
\end{theorem}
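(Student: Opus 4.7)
The plan is to prove the theorem via the Stieltjes transform of the empirical eigenvalue measure. For $z=E+i\eta$ with $\eta>0$, set
\[
m_N(z) \;=\; \frac{1}{N}\operatorname{tr}\frac{1}{H-z} \;=\; \frac{1}{N}\sum_{\alpha=1}^N \frac{1}{\mu_\alpha-z}\,,
\qquad
m_{\text{sc}}(z)=\int\frac{\rho_{\text{sc}}(s)}{s-z}\,ds,
\]
so that $m_{\text{sc}}$ is the unique solution in the upper half-plane of the quadratic self-consistent equation $m_{\text{sc}}(z)+\bigl(z+m_{\text{sc}}(z)\bigr)^{-1}=0$, and $\operatorname{Im} m_{\text{sc}}(E+i0^+)=\pi\rho_{\text{sc}}(E)$ for $|E|<2$. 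The goal is to show that for $\eta=K/N$ and $K$ large (but fixed as $N\to\infty$), $\operatorname{Im} m_N(E+i\eta)$ concentrates around $\pi\rho_{\text{sc}}(E)$, and then to extract the counting statement from this.

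For the first step, I would derive an approximate self-consistent equation for $m_N(z)$ by the Schur complement / resolvent identity. Expanding the $k$-th diagonal entry of $(H-z)^{-1}$ and using independence of the $k$-th row/column from the minor $H^{(k)}$, one obtains
\[
(H-z)^{-1}_{kk} \;=\; \frac{1}{h_{kk}-z-\mathbf{a}_k^*\,(H^{(k)}-z)^{-1}\,\mathbf{a}_k}\,,
\]
where $\mathbf{a}_k$ is the $k$-th column with the $k$-th entry removed. A standard large deviation estimate for quadratic forms in independent variables, using the sub-Gaussian decay assumption in Definition~\ref{def}, shows that $\mathbf{a}_k^*(H^{(k)}-z)^{-1}\mathbf{a}_k$ concentrates around $(1/N)\operatorname{tr}(H^{(k)}-z)^{-1}\simeq m_N(z)$ with error controlled by $(N\eta)^{-1/2}$ times a logarithmic factor. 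Averaging over $k$ yields an approximate fixed-point equation $m_N(z)+(z+m_N(z))^{-1}=\delta_N(z)$ with $\delta_N\to 0$. Since the map $w\mapsto -(z+w)^{-1}$ is stable around $m_{\text{sc}}$ away from the spectral edges, this gives $|m_N(E+i\eta)-m_{\text{sc}}(E+i\eta)|\to 0$ in probability for $\eta\gg 1/N$.

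Next I would transfer the Stieltjes-transform bound to the counting function. Note that
\[
\operatorname{Im} m_N(E+i\eta)\;=\;\frac{1}{N}\sum_{\alpha}\frac{\eta}{(\mu_\alpha-E)^2+\eta^2}
\;=\;\frac{\pi}{N}\int \rho_N(\lambda)\,P_\eta(\lambda-E)\,d\lambda,
\]
where $\rho_N$ is the empirical density and $P_\eta$ is the Poisson kernel. Choosing $\eta=K/(2N)$, one has $\eta\cdot\operatorname{Im}m_N(E+i\eta)\to\pi\rho_{\text{sc}}(E)\cdot K/(2N)$, which controls a Cauchy-smoothed version of $\mathcal{N}[E-K/(2N),E+K/(2N)]/K$. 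To upgrade from smoothed counting to sharp counting, I would combine upper-bound inequalities such as
\[
\mathcal{N}[E-\eta,E+\eta]\;\le\;C\,\eta\,N\operatorname{Im} m_N(E+i\eta),
\]
with a lower bound obtained by integrating $\operatorname{Im} m_N(\lambda+i\eta')$ over $\lambda\in[E-\eta,E+\eta]$ with a slightly smaller scale $\eta'\ll\eta$ and then sending $K\to\infty$ to swamp the $\eta'$-error and the tails of the Poisson kernel.

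The main obstacle is controlling $m_N(z)$ down to the microscopic scale $\eta\sim 1/N$: the naive large-deviation bound gives an error $(N\eta)^{-1/2}$, which becomes order one exactly at this scale, so the order-$K$ worth of eigenvalues in the interval only becomes visible after taking $K$ large. Handling this requires either a bootstrap of the self-consistent equation (starting from macroscopic $\eta$ and iteratively descending to $\eta=K/N$) or a careful deterministic stability analysis of the fixed point $m_{\text{sc}}$ combined with uniform sub-Gaussian tail estimates on the quadratic forms $\mathbf{a}_k^*(H^{(k)}-z)^{-1}\mathbf{a}_k$. Taking $N\to\infty$ first with $K$ fixed and then $K\to\infty$ is exactly the order in which these two errors can be made to vanish sequentially, which is reflected in the iterated limit in \eqref{eq:microsc}.
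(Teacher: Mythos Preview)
Your proposal is correct and follows essentially the same route as the paper's own sketch (which appears after the bibliography in the source): reduce to the Stieltjes transform $m_N(z)$, use the Schur complement formula for the diagonal resolvent entries, exploit concentration of the quadratic forms $\mathbf a_k^*(H^{(k)}-z)^{-1}\mathbf a_k$ around $m_N(z)$, and invoke stability of the fixed-point equation $m_{\text{sc}}+ (z+m_{\text{sc}})^{-1}=0$ away from the edges.

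The one structural point worth flagging is how the paper organizes the descent to scale $\eta\sim 1/N$. You correctly identify that the naive large-deviation bound on the quadratic form degenerates at $N\eta=O(1)$ and propose a bootstrap. The paper makes this more concrete by splitting the argument into two explicit ingredients: (1) an \emph{a priori} upper bound on the local density of states, namely $\P\bigl(\mathcal N[E-\eta/2,E+\eta/2]/(N\eta)\ge K\bigr)\lesssim e^{-c\sqrt{KN\eta}}$, proved by a self-referential argument comparing $\operatorname{Im}(H-z)^{-1}(j,j)$ to the density of the minor and using interlacing; and (2) feeding this upper bound back in to control the fluctuation term $Y^{(j)}(z)=N^{-1}\sum_\alpha(\xi_\alpha-1)/(\lambda_\alpha-z)$ in the approximate self-consistent equation, since the upper bound limits how many eigenvalues $\lambda_\alpha$ can sit dangerously close to $E$. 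Your ``bootstrap from macroscopic $\eta$'' is the same idea in less explicit form; the paper's version has the advantage that step (1) is logically prior and self-contained, which makes the circularity easier to avoid.
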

In contrast with Wigner's original result, this theorem establishes the convergence of the density of states to the semicircle law on microscopic intervals, that is on intervals containing, typically, a constant ($N$ independent) number of eigenvalues.
{F}rom the convergence on the microscopic scale, it is easy to conclude convergence on arbitrary intermediate scales; for any $|E| <2$ and any sequence $\eta (N) >0$ such that $\eta (N) \to 0$ and $N \eta (N) \to \infty$ as $N \to \infty$ we have
\[ \lim_{N \to \infty} \P \left( \left| \frac{\cN \left[E-\frac{\eta(N)}{2} ; E+\frac{\eta (N)}{2} \right]}{N\eta (N)} - \rho_{\text{sc}} (E) \right| \geq \delta \right) = 0\, . \]
Note that, if $\eta (N) \lesssim 1/N$, the fluctuations of the density of states are certainly important, and one cannot expect convergence in probability.

\subsection{Delocalization of eigenvectors}
\label{sec:deloc}

As a simple application of the convergence to the semicircle law on microscopic scales, one can show the complete delocalization of the eigenvectors of Wigner matrices.
Let $\bv \in \bC^N$ with $\| \bv \|_2 =1$. The vector $\bv$ is said to be completely localized if one of its component has size one, and all other components vanish. On the other hand, $\bv$ is called completely delocalized, if all its components have the same size (namely $N^{-1/2}$). In order to distinguish localized from delocalized vectors, one can compute the $\ell^p$ norm, for $p >2$. If $\bv$ is completely localized,
$\| \bv \|_p =1$ for all $p \geq 2$ and for all $N \in \bN$. If $\bv$ is completely delocalized, $\| \bv \|_p = N^{-\frac{1}{2} + \frac{1}{p}}$ and converges to zero, as $N \to \infty$. The next theorem was proven in \cite{ESY3}, extending results from \cite{ESY1,ESY2}.
\begin{theorem}\label{thm:deloc}
Consider an ensemble of hermitian Wigner matrices as in Def. \ref{def}. Fix $|E|<2$, $K >0$ and $2 < p < \infty$. Then
\begin{equation}\label{eq:deloc1} \lim_{M \to \infty} \lim_{N \to \infty}
\P \left( \exists \bv : H \bv = \mu \bv, |\mu - E| \leq \frac{K}{N}, \| \bv \|_2 = 1, \|\bv \|_p \geq M N^{-\frac{1}{2}+ \frac{1}{p}} \right) = 0 \, . \end{equation}
\end{theorem}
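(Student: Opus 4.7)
The plan is to establish the stronger $\ell^\infty$-delocalization estimate $\|\bv\|_\infty = O(N^{-1/2})$ on a high-probability event --- from which the $\ell^p$ bound in~(\ref{eq:deloc1}) follows via the elementary interpolation $\|\bv\|_p^p \leq \|\bv\|_\infty^{p-2}\|\bv\|_2^{2}$ (valid for $2<p<\infty$) --- by uniformly controlling the diagonal entries of the resolvent $G(z) = (H-z)^{-1}$ at spectral parameter $z = E + i\eta$ with $\eta$ a sufficiently large multiple of $1/N$. From the spectral identity $\text{Im}\,G_{jj}(E+i\eta) = \sum_\alpha \eta|\bv_\alpha(j)|^2/[(E-\mu_\alpha)^2 + \eta^2]$, retaining only a single term with $|\mu_\alpha - E| \leq \eta$ yields the pointwise bound $|\bv_\alpha(j)|^2 \leq 2\eta\,\text{Im}\,G_{jj}(E+i\eta)$. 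Since $|\mu_\alpha - E| \leq K/N \leq \eta$ for every eigenpair considered in~(\ref{eq:deloc1}), it suffices to prove that $\text{Im}\,G_{jj}(E+i\eta) \leq C(E)$ holds simultaneously for all $j = 1,\dots,N$ with probability $1 - o_N(1)$.

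To bound $\text{Im}\,G_{jj}$, I would invoke the Schur complement identity $G_{jj}(E+i\eta) = 1/X_j$ with
$$X_j = h_{jj} - E - i\eta - \bba_j^{*}(H^{(j)}-E-i\eta)^{-1}\bba_j,$$
where $H^{(j)}$ is the $(N-1)\times(N-1)$ minor obtained by deleting row and column $j$, and $\bba_j$ is the corresponding off-diagonal column. Writing $B_j := (H^{(j)}-E-i\eta)^{-1}$ and using that $\text{Im}\,B_j$ is positive semidefinite, one computes $-\text{Im}\,X_j = \eta + \bba_j^{*}\text{Im}(B_j)\bba_j$, and hence $\text{Im}\,G_{jj} \leq 1/(\bba_j^{*}\text{Im}(B_j)\bba_j)$. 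The task reduces to proving the uniform-in-$j$ lower bound $\bba_j^{*}\text{Im}(B_j)\bba_j \geq c(E)>0$ with probability $1 - o(1/N)$.

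The key point is that $\bba_j$ is independent of $H^{(j)}$, with centered entries of variance $1/N$. Conditioning on $H^{(j)}$,
$$\E_{\bba_j}\bigl[\bba_j^{*}\text{Im}(B_j)\bba_j\,\bigm|\,H^{(j)}\bigr] = \frac{1}{N}\,\tr\,\text{Im}(B_j) = \text{Im}\,m_N^{(j)}(E+i\eta),$$
where $m_N^{(j)}$ is the Stieltjes transform of the empirical spectral measure of $H^{(j)}$. Since $H^{(j)}$ is itself a Wigner matrix in the sense of Def.~\ref{def}, Theorem~\ref{thm:micro} --- reinterpreted at microscopic imaginary parts --- guarantees this conditional mean is close to $\pi\rho_{\text{sc}}(E) > 0$ with probability approaching one. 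Concentration of the random quadratic form around its conditional mean then follows from a Hanson--Wright-type inequality, whose sub-Gaussian input is furnished by the exponential moment assumption in Def.~\ref{def}, and which exploits the operator- and Frobenius-norm bounds $\|\text{Im}\,B_j\| \leq 1/\eta$ and $\|\text{Im}\,B_j\|_F^2 \leq \|\text{Im}\,B_j\|\cdot\tr\,\text{Im}(B_j) = O(N/\eta)$.

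The main obstacle is precisely this concentration step. Hanson--Wright in the relevant scaling yields tails of the form $\exp(-c\eta N)$ for constant-size deviations, so $\eta N$ must be pushed slightly beyond a constant to close the union bound over the $N$ choices of $j$; the resulting logarithmic factors in the $\ell^\infty$ estimate are absorbed by the double-limit structure of~(\ref{eq:deloc1}). A further technical wrinkle is that Theorem~\ref{thm:micro} must be applied uniformly across the $N$ minors $H^{(j)}$, with error estimates strong enough to survive the union bound. Once these concentration and uniformity issues are handled, the remaining steps (Schur, spectral decomposition, interpolation) are essentially routine bookkeeping.
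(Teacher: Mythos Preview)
The paper does not actually prove Theorem~\ref{thm:deloc}; it is stated in the review Section~\ref{sec:review} with a citation to \cite{ESY3}, accompanied only by the remark that delocalization is ``a simple application of the convergence to the semicircle law on microscopic scales'' (Theorem~\ref{thm:micro}). Your proposal is precisely that standard argument from \cite{ESY1,ESY2,ESY3}: bound $|\bv_\alpha(j)|^2$ by $\eta\,\text{Im}\,G_{jj}(E+i\eta)$, control $G_{jj}$ via the Schur complement, and show the quadratic form $\bba_j^*\,\text{Im}(B_j)\,\bba_j$ is bounded below by combining the local semicircle law for the minor with a large-deviation estimate for the quadratic form. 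So your approach coincides with what the paper has in mind.

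Two minor comments. First, you correctly flag that Theorem~\ref{thm:micro} as stated gives only qualitative convergence in probability, whereas the union bound over $j=1,\dots,N$ requires a quantitative rate; the needed input is really the upper bound~(\ref{eq:upper}) in the sketch following the bibliography (equivalently, Theorem~5.1 or Lemma~4.7 of \cite{ESY3}), which gives stretched-exponential tails once $N\eta \gtrsim (\log N)^2$. Second, in \cite{ESY3} the concentration of $\bba_j^*\,\text{Im}(B_j)\,\bba_j$ is obtained not via Hanson--Wright per se but by expanding in the eigenbasis of $B_j$ and using a large-deviation bound for $\sum_\alpha c_\alpha (\xi_\alpha - 1)$ with $\xi_\alpha = N|\bba_j\cdot\bu_\alpha|^2$; the two routes are essentially equivalent under the sub-Gaussian hypothesis in Def.~\ref{def}, and your identification of the relevant norm bounds on $\text{Im}\,B_j$ is correct.
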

Eq. (\ref{eq:deloc1}) shows the complete delocalization of the eigenvectors of Wigner matrices; up to constants, all components of eigenvectors have the same size.

\subsection{Universality of Wigner matrices}
\label{sec:uni}

For Wigner matrices, universality refers to the fact that the local eigenvalue correlations depend on the symmetry of the ensemble but, otherwise, they are independent of the probability law of the entries.

One distinguishes, typically, between universality at the edge and in the bulk of the spectrum. For hermitian Wigner matrices, the local statistics at the edges are described by the Tracy-Widom distribution; see \cite{TW,S}. Here, we restrict our attention to bulk universality.

Let $p_N (\mu_1, \dots , \mu_N)$ be the joint probability density function of the $N$ (unordered) eigenvalues of a hermitian Wigner matrix $H$. For any $k = 1, 2, \dots, N$, we define the $k$ point correlation function
\begin{equation}\label{eq:kpoint} p^{(k)}_N (\mu_1, \dots , \mu_k) = \int d\mu_{k+1} \dots d\mu_{N} \, p_N (\mu_1, \dots , \mu_N) \, . \end{equation}
For large $N$, the typical distance between neighboring eigenvalues of $H$ is of the order $1/N$. For this reason, non-trivial correlations can only emerge when all arguments of $p^{(k)}$ range within an interval of size of order $1/N$; in this case we speak about local correlations. Using the explicit expression (\ref{eq:GUE}), Dyson showed in \cite{Dy} that the local correlations of GUE in the limit of large N converge to the determinental process associated with the sine-kernel. The next theorem, proven in \cite{ERSTVY}, shows that the same local correlations are observed for general Wigner matrices.

\begin{theorem}\label{thm:uni}
Suppose $H$ is a Wigner matrix as defined in Def. \ref{def}, with $\E \, x_{ij}^3 = 0$. Then, for any fixed $|E| <2$ and $k \in \bN$, we have
\[ \frac{1}{\rho_{sc}^k (E)} \, p^{(k)} \left( E + \frac{x_1}{N \rho_{sc} (E)} , \dots , E+ \frac{x_k}{N \rho_{sc} (E)} \right) \to \left( \frac{\sin (\pi (x_i -x_j))}{ \pi (x_i - x_j)} \right)_{1 \leq i,j \leq k} \] as $N \to \infty$, where $p^{(k)}$ is the $k$-point correlation function defined in (\ref{eq:kpoint}). Here convergence holds in a weak sense, after integrating against a bounded and compactly supported observable $O(x_1, \dots , x_k)$.
\end{theorem}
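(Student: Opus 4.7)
The plan is to prove the theorem by a two-step strategy that is now standard for bulk universality of Wigner ensembles: first establish the sine-kernel limit for a suitable class of Gaussian-divisible matrices using the dynamics of Dyson Brownian motion, and then transfer the result to a general Wigner matrix $H$ (satisfying the moment conditions of Def.~\ref{def} together with $\E x_{ij}^3 = 0$) by a Green function / moment-matching comparison argument. Throughout, Theorems~\ref{thm:micro} and~\ref{thm:deloc} will be used as essential inputs: they provide the local semicircle law and complete eigenvector delocalization on the optimal scale $\eta \gg 1/N$, which are needed both to control the stochastic analysis of DBM and to control the remainder terms in the comparison step.

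\textbf{Step 1: Gaussian divisible ensembles via DBM.} For $t>0$, I would introduce the Ornstein--Uhlenbeck interpolation $H_t = e^{-t/2} H_0 + (1-e^{-t})^{1/2} G$, where $H_0$ is a Wigner matrix satisfying Def.~\ref{def} and $G$ is an independent GUE matrix. The eigenvalues $\mu_1(t), \dots, \mu_N(t)$ then evolve under Dyson Brownian motion with stationary measure equal to the GUE eigenvalue law~(\ref{eq:GUE}). I would prove that already for times of order $t \sim N^{-1+\e}$ with some small $\e>0$, the local $k$-point correlation functions of $H_t$, rescaled on scale $1/N$ around a fixed energy $|E|<2$, converge to the sine-kernel determinant. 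The argument uses a local relaxation mechanism: combine a logarithmic Sobolev / entropy dissipation estimate for the DBM generator with the rigidity of eigenvalues near the semicircle law (which again follows from Theorem~\ref{thm:micro} and its refinements) to show that the conditional density of $k$ adjacent eigenvalues relaxes to its GUE counterpart on a time scale much shorter than $1$.

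\textbf{Step 2: removing the Gaussian divisibility.} Given a general Wigner matrix $H$ with $\E x_{ij}^3 = 0$, I would construct a matrix $H_0$ such that $H_t$ (for a suitable $t \sim N^{-1+\e}$) has entry distribution whose first four moments agree with those of $H$ up to an acceptable error; the vanishing third moment assumption and the unit fourth moment of the small Gaussian perturbation make this moment-matching possible. Then I would run a Lindeberg-style swapping procedure, replacing the entries of $H$ one by one with the corresponding entries of $H_t$ and controlling the change in smooth functionals of the resolvent $G_z = (H-z)^{-1}$ at spectral parameter $z = E + i\eta$ with $\eta = N^{-1-\delta}$. Each swap contributes an error controlled by a Taylor expansion up to fourth order in the entry, whose individual terms are bounded using the local semicircle law and the delocalization of eigenvectors from Theorems~\ref{thm:micro}--\ref{thm:deloc}; the leading three orders cancel because of moment matching, and the remaining terms give a net error that vanishes after summing $N^2$ swaps. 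A standard Helffer--Sj\"ostrand representation then converts this resolvent comparison into a comparison of integrated correlation functions against a bounded, compactly supported observable $O(x_1, \dots, x_k)$, which is exactly the form of convergence asserted in the theorem.

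\textbf{Main obstacle.} The heart of the argument, and its hardest piece, is the relaxation estimate for Dyson Brownian motion in Step~1: one must show that local equilibrium is reached on a time scale $t \ll 1$, which requires an effectively quadratic lower bound on the convexity of the DBM Hamiltonian restricted to a window around $E$. This in turn demands very precise rigidity estimates on the locations of individual eigenvalues (much stronger than the semicircle law on order-$1$ intervals), and a careful choice of a reference measure for which a logarithmic Sobolev inequality with the right constant can be established. The comparison argument in Step~2 is technically involved but essentially mechanical once one has the optimal local semicircle law and delocalization bounds of Theorems~\ref{thm:micro} and~\ref{thm:deloc}.
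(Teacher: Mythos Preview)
Your two-step strategy is a correct and by now standard route to bulk universality, but it is not the route the paper attributes to Theorem~\ref{thm:uni}. The paper does not prove this theorem itself; it is stated in the review Section~\ref{sec:uni} and credited to \cite{ERSTVY}, which combines \cite{EPRSY} and \cite{TV}. In that argument, Step~1 (universality for Gaussian-divisible ensembles $H_0 + sV$) is \emph{not} obtained via Dyson Brownian motion and local relaxation flow, but via Johansson's result \cite{J}, which gives the sine-kernel limit at a \emph{fixed} energy $E$ through an explicit contour-integral (Harish--Chandra--Itzykson--Zuber) representation of the correlation kernel. Step~2 is then the Tao--Vu four-moment comparison theorem of \cite{TV}: under the hypothesis $\E\,x_{ij}^3 = 0$ one can match the first four moments of a general Wigner entry with those of a suitable Gaussian-divisible entry, and the Lindeberg swapping transfers the fixed-$E$ sine-kernel statistics. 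The DBM / local relaxation flow approach you outline is exactly what the paper cites separately as \cite{ESY4,ESYY,EYY}; as the paper explicitly remarks, that method (at the time of writing) yields universality only \emph{after averaging $E$ over a small interval}, not at a fixed $E$ as asserted in Theorem~\ref{thm:uni}. So your Step~1, as written, would not by itself deliver the fixed-$E$ conclusion of the theorem; the compensating advantage of your route is that it is symmetry-class independent, whereas Johansson's contour-integral method is specific to the hermitian case.
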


Without the technical requirement $\E \, x_{ij}^3 = 0$, the same result is proven to hold if one integrates also the variable $E$ over an arbitrarily small but fixed interval; see \cite{ERSTVY}.

Theorem \ref{thm:uni} was obtained combining techniques developed separately in \cite{EPRSY} and \cite{TV}. In \cite{EPRSY}, universality was proven for ensembles of Wigner matrices whose entries have a sufficiently regular distribution. In \cite{TV}, universality was then shown under the assumption that the entries have a vanishing third moment and are supported on at least three points. Both approaches made use of a  previous partial result obtained in \cite{J}, where universality was proven for Wigner matrices of the form $H = H_0 + s V$, with $H_0$ an arbitrary Wigner matrix, $V$ a GUE matrix independent of $H_0$ and $s>0$.

Observe that the technique of \cite{J} does not extend to matrices with different symmetry (real symmetric of quaternion hermitian). For this reason, \cite{EPRSY} does not apply to Wigner matrices with non-hermitian symmetry. Also the result of \cite{TV} only implies universality for real symmetric and quaternion hermitian matrices, if the first four moments of the entries match those of the corresponding gaussian ensemble. A different approach was later proposed in \cite{ESY4,ESYY,EYY}, where universality is established for arbitrary Wigner matrices, independently of their symmetry, after integrating the variable $E$ over small intervals.

\section{Wegner estimate}
\label{sec:weg}

In this section, we establish a new Wegner estimate for the eigenvalues of hermitian ensembles of Wigner matrices. A Wegner estimate is an upper bound for the probability to find an eigenvalue in some interval $I \subset \bR$ which is proportional to the length $|I|$ of the interval, and holds for arbitrarily small intervals, uniformly in $N$.

\begin{theorem}\label{thm:weg}
Let $H$ be a hermitian Wigner matrix, as in Def. \ref{def}. We assume that the random variables $\{ x_{ij}, y_{ij} \}_{1\leq i  < j \leq N}$ have the common probability density function $h$ with the property
\begin{equation}\label{eq:ass-weg} \int \left( \frac{h' (s)}{h(s)} \right)^4 h(s) ds < \infty \, .\end{equation}
For any fixed $\kappa >0$, there exists a constant $C>0$ such that
\begin{equation}\label{eq:weg} \P \left( \cN \left[E-\frac{\eps}{2N} ; E + \frac{\eps}{2N} \right] \geq 1 \right) \leq C \eps \end{equation}
for all $|E| \leq 2 - \kappa$, for all $N \geq 9$ and all $\eps >0$.
\end{theorem}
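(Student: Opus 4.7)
The plan is to reduce the probability that $\cN[I]\geq 1$ to an expectation bound on the imaginary part of the Stieltjes transform $m_N(z) = N^{-1}\mathrm{Tr}\,(H-z)^{-1}$, and then to control that expectation via an approximate self-consistent equation obtained by integrating by parts against the score function $h'/h$.

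To begin, I set $\eta=\varepsilon/N$ and $z=E+i\eta$. If some eigenvalue $\mu_\alpha$ of $H$ lies in $I$, i.e.\ $|\mu_\alpha-E|\leq \varepsilon/(2N) = \eta/2$, then the spectral representation
$$\mathrm{Im}\, G_{kk}(z) = \sum_{\beta} |u_\beta(k)|^2 \frac{\eta}{(\mu_\beta-E)^2+\eta^2}$$
combined with the inequality $\eta/((\mu_\alpha-E)^2+\eta^2)\geq 4/(5\eta)$ and summation over $k$ gives $N\,\mathrm{Im}\,m_N(z)\geq 4/(5\eta)$, hence $\mathrm{Im}\,m_N(z)\geq 4/(5\varepsilon)$. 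By Markov's inequality,
$$\P(\cN[I]\geq 1)\leq \tfrac{5\varepsilon}{4}\,\E\,\mathrm{Im}\,m_N(E+i\eta),$$
so it suffices to prove the $N$- and $\eta$-uniform bound $\E\,\mathrm{Im}\,m_N(E+i\eta)\leq C_\kappa$ for $|E|\leq 2-\kappa$.

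To obtain this uniform bound, I work with the $(k,k)$ matrix entry of the identity $(H-z)G=I$, which reads $\sum_{j\ne k} h_{kj} G_{jk} = 1 + (z-h_{kk})G_{kk}$, and take expectations. Applying the general integration-by-parts formula $\E[\partial_x\varphi(X)]=-\E[\varphi(X)(h'/h)(X)]$ to each real entry $x_{jk}, y_{jk}$, together with the derivation rule $\partial_{x_{jk}}G = -N^{-1/2}G(E_{jk}+E_{kj})G$, converts the linear-in-$H$ terms on the left into quadratic-in-$G$ terms plus a remainder. The result is a perturbed form of the semicircle self-consistent equation,
$$\E\, G_{kk}(z) = -\frac{1}{z+\E\, m_N(z)}\, +\, \mathcal{R}_N(z),$$
whose leading part matches the fixed-point equation for the semicircle Stieltjes transform $g_{sc}(z)$. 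The remainder $\mathcal{R}_N$ collects (i) a variance contribution $\E[m_N G_{kk}]-\E m_N\,\E G_{kk}$, and (ii) non-Gaussian terms of the schematic form $\E[(h'/h)(x_{jk})\, G_{ab}G_{cd}]$ produced by the non-Gaussian part of the integration-by-parts identity.

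The main obstacle lies in controlling the non-Gaussian terms in $\mathcal{R}_N$: the score function $h'/h$ is in general unbounded, so one cannot argue pointwise. Instead I use H\"older and Cauchy--Schwarz in the form
$$\E\bigl|(h'/h)(x_{jk})\,G_{ab}G_{cd}\bigr|\leq \bigl(\E(h'/h)^4\bigr)^{1/4}\bigl(\E|G_{ab}|^4\bigr)^{1/4}\bigl(\E|G_{cd}|^2\bigr)^{1/2}:$$
hypothesis (\ref{eq:ass-weg}) makes the first factor finite, while the resolvent factors are estimated by $\E\,\mathrm{Im}\,m_N(z)$ via the Ward identity $\sum_j |G_{kj}|^2 = \mathrm{Im}\,G_{kk}/\eta$ and the trivial bound $|G_{jk}|\leq \eta^{-1}$. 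Since $g_{sc}(z)$ is bounded on $\{|E|\leq 2-\kappa\}$ uniformly in $\eta>0$, the self-consistent relation collapses to an inequality of the form
$$\E\,\mathrm{Im}\,m_N(z)\leq C_\kappa + C_\kappa'\, N^{-\alpha}\bigl(\E\,\mathrm{Im}\,m_N(z)\bigr)^{1/2}$$
for some $\alpha>0$, valid uniformly in $\eta>0$ and $N\geq 9$. This closes to $\E\,\mathrm{Im}\,m_N(E+i\eta)\leq C$, and inserting the bound into the first step completes the proof.
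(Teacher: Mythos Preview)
Your reduction in the first paragraph is fine: bounding $\P(\cN_\eps\ge 1)$ by $C\eps\,\E\,\mathrm{Im}\,m_N(E+i\eta)$ with $\eta=\eps/N$ is correct, and a uniform bound $\E\,\mathrm{Im}\,m_N(E+i\eta)\le C_\kappa$ for all $\eta>0$ would indeed give the Wegner estimate. The problem is that your proposed mechanism for obtaining this uniform bound does not work.

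The self-consistent (fixed-point) equation approach you sketch is exactly the engine behind the \emph{local} semicircle law, and it is well known to break down once $\eta\ll 1/N$. Concretely, every error term you produce---the variance term $\E[m_N G_{kk}]-\E m_N\,\E G_{kk}$, the off-diagonal pieces $\sum_j G_{jk}^2$, and the ``non-Gaussian'' remainders---is controlled either via the Ward identity $\sum_j |G_{kj}|^2=\eta^{-1}\mathrm{Im}\,G_{kk}$ or the trivial bound $|G_{jk}|\le\eta^{-1}$. With $\eta=\eps/N$ these give factors $1/(N\eta)=1/\eps$ and $N/\eps$, which are \emph{large}, not $N^{-\alpha}$-small. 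Your claimed inequality
\[
\E\,\mathrm{Im}\,m_N(z)\le C_\kappa + C_\kappa'\,N^{-\alpha}\bigl(\E\,\mathrm{Im}\,m_N(z)\bigr)^{1/2}
\]
therefore cannot hold uniformly in $\eta>0$; the coefficient in front of the last term is at best $O((N\eta)^{-\beta})$, which diverges as $\eps\to 0$. There is also a stability issue at the level of the limiting equation: in the bulk $|z+m_{\mathrm{sc}}(z)|\to 1$ as $\eta\to 0$, so the linearization of $m+1/(z+m)=0$ is marginal and one cannot simply absorb $\E\,\mathrm{Im}\,m_N$ from the right into the left.

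The paper's proof avoids this obstruction by not trying to close a self-consistent equation at sub-microscopic scales. Instead it bounds the \emph{second} moment $\P(\cN_\eps\ge 1)\le\E\,\cN_\eps^2$, expresses a single diagonal resolvent entry via the Schur complement, and exploits the independence of the first row $\bb$ from the minor $B^{(1)}$. The randomness in $\bb$ (not the self-consistency of $m_N$) is what supplies the missing smallness: Lemma~\ref{lm:Exi} shows that $\E_\bb\bigl(\sum_{j=1}^m|\bb\cdot\bu_j|^2\bigr)^{-r}$ is bounded uniformly, using only the assumption $\int (h'/h)^4 h<\infty$, and this is combined with an integration by parts in the $\bb$-variables and the gap estimate (Theorem~\ref{thm:gap}) to control the size of the relevant spectral coefficients. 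That argument never needs $\eta$ to be large compared to $1/N$.
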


The result of this theorem and the proof presented below extend easily to ensembles of Wigner matrices with different symmetries (real symmetric and  quaternion hermitian ensembles).

It is important to observe that some regularity of the probability density function of the matrix entries is required for (\ref{eq:weg}) to be correct. If entries have discrete distributions, the event that an eigenvalue lies exactly at $E$ may have non-zero probability. This probability is certainly (exponentially) small in $N$, but it does not depend on $\eps$; hence, in this case, the bound (\ref{eq:weg}) cannot hold uniformly in $\eps >0$.

A Wegner estimate for eigenvalues of Wigner matrices was previously obtained in \cite{ESY3} (see Theorem~3.4) under the assumption that the entries have a probability density function
$h = e^{-g}$ with
\begin{equation}\label{eq:old-weg} |\hat{h} (p)|, |\widehat{hg''} (p)| \leq \frac{1}{(1+C p^2)^{\sigma}} \end{equation} for some $\sigma \geq 5$. To show (\ref{eq:weg}) we combine some new ideas (see, in particular, Lemma \ref{lm:Exi}) with the same general strategy used in \cite{ESY3}. As in \cite{ESY3}, one of the main ingredients in the proof of (\ref{eq:weg}) is the convergence of the density of states on microscopic intervals. More precisely, Theorem \ref{thm:micro} is used to establish the absence of large gaps in the spectrum, as stated in the next Theorem.
\begin{theorem}[Theorem 3.3 of \cite{ESY3}] \label{thm:gap}
Let $H$ be a hermitian Wigner matrix as defined in Def. \ref{def}. Let $\mu_1 \leq \dots \leq \mu_N$ denote the eigenvalues of $H$. Fix $\kappa>0$ and $|E| < 2-\kappa$. Let the (random) index $\alpha$ be so that $\mu_\alpha$ is the largest eigenvalue below
$E$. Then  there are positive constants $C$ and $c$, depending on $\kappa$,
such that
\be
  \P \Big(N (\mu_{\alpha +1} -E) \ge K \quad \text{and }  \; \alpha \leq N-1\Big)
\leq C\; e^{-c\sqrt{K}}
\label{gapdec}
\ee
for any $N\ge 1$ and any $K\ge 0$.
\end{theorem}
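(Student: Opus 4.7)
The plan is to translate the gap event into an $\Omega(1)$ discrepancy between the empirical Stieltjes transform $m_N(z) = \frac{1}{N}\tr(H-z)^{-1}$ and the semicircular transform $m_{\text{sc}}(z)$ at a suitably chosen spectral parameter $z$, and then apply the large-deviation estimate
\[
\P\bigl(|m_N(E+i\eta) - m_{\text{sc}}(E+i\eta)| \geq c_0\bigr) \leq C\, e^{-c\sqrt{N\eta}}
\]
that underlies the ESY proof of Theorem~\ref{thm:micro}. The $\sqrt{K}$ in the exponent of (\ref{gapdec}) enters precisely through $\sqrt{N\eta}$ at the scale $\eta \sim K/N$.

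For the first step I would set $\eta = K/(2N)$ and choose $z := E + \eta + i\eta$, so that $z$ sits at height $\eta$ above the midpoint of the conjectured empty interval $(E, E+K/N)$. The spectral decomposition gives
\[
\mathrm{Im}\,m_N(z) = \int P_\eta(t - \mathrm{Re}\,z)\, d\nu_N(t), \qquad P_\eta(s) := \frac{\eta}{s^2+\eta^2},
\]
with $\nu_N = \frac{1}{N}\sum_\alpha \delta_{\mu_\alpha}$, and analogously for $m_{\text{sc}}$ against $\rho_{\text{sc}}(t)\,dt$. On the gap event $\nu_N$ assigns zero mass to $(E, E+K/N)$, which is precisely where $P_\eta$ takes its largest values; a direct calculation shows that this forces a deficit of
\[
\rho_{\text{sc}}(E) \int_{-\eta}^{\eta} P_\eta(s)\, ds = \frac{\pi\,\rho_{\text{sc}}(E)}{2}
\]
in $\mathrm{Im}\,m_N(z)$ relative to its semicircular prediction. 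Outside the gap, $P_\eta(s) \leq 1/(2\eta)$ and decays like $\eta/s^2$, so an a priori bound on the integrated density of states on mesoscopic scales (Theorem~\ref{thm:micro} in its quantitative form from \cite{ESY3}) rules out an anomalous excess of eigenvalues that could offset the deficit. The upshot is that the gap event implies
\[
|\mathrm{Im}(m_N - m_{\text{sc}})(z)| \geq c_0(\kappa) > 0,
\]
where $c_0(\kappa)$ depends only on the lower bound $\rho_{\text{sc}}(E) \geq c(\kappa)$ valid in the bulk.

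For the second step I would invoke the Stieltjes-transform large-deviation estimate from the earlier ESY papers. Uniformly for $|E| \leq 2-\kappa$ and $\eta \geq N^{-1}$, it is obtained by iterating the Schur complement identity for the diagonal entries of $(H-z)^{-1}$ and applying concentration of quadratic forms $\langle \mathbf{a}_j, B \mathbf{a}_j\rangle$ in the rows $\mathbf{a}_j$ of $H$, using the sub-Gaussian tail built into Definition~\ref{def}. Setting $\eta = K/(2N)$ yields $\sqrt{N\eta} = \sqrt{K/2}$; combining with the first step gives (\ref{gapdec}).

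The principal obstacle is making the first step robust: since the gap is one-sided, an anomalous accumulation of eigenvalues just below $E$ could contribute positively to $\mathrm{Im}\,m_N(z)$ and partially cancel the deficit from the empty interval above $E$. To rule this out uniformly in $K$ one must use the quantitative form of the semicircle law on mesoscopic scales from \cite{ESY3} (rather than the merely qualitative convergence in probability recorded in Theorem~\ref{thm:micro}) to control the empirical count in $(E-K/N, E)$ with a tail bound that matches the $e^{-c\sqrt{K}}$ target.
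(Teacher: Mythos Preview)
The paper does not actually prove Theorem~\ref{thm:gap}; it is quoted verbatim from \cite{ESY3} (Theorem~3.3 there) and used as a black box in the proof of Theorem~\ref{thm:weg}. The only hint the present paper gives is the sentence preceding the statement: ``Theorem~\ref{thm:micro} is used to establish the absence of large gaps in the spectrum.'' So there is no in-paper proof to compare your attempt against.

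That said, your sketch is essentially the argument carried out in \cite{ESY3}. The mechanism is exactly the one you describe: a gap of length $K/N$ around $E$ produces an $O(1)$ deficiency of $\mathrm{Im}\,m_N(z)$ relative to $\mathrm{Im}\,m_{\mathrm{sc}}(z)$ at a spectral parameter with $\eta\sim K/N$, and the large-deviation bound $\P(|m_N-m_{\mathrm{sc}}|(E+i\eta)\ge c_0)\le Ce^{-c\sqrt{N\eta}}$ then converts this into the tail $e^{-c\sqrt{K}}$. Your identification of the obstacle (possible pile-up of eigenvalues just outside the gap compensating the deficit) is also the right one, and the remedy you name---the quantitative upper bound on $\cN[E-\eta/2,E+\eta/2]/(N\eta)$ with the same subexponential tail, i.e.\ the estimate the paper records informally as the ``upper bound'' ingredient behind Theorem~\ref{thm:micro}---is precisely what \cite{ESY3} uses to close the argument. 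One small point: rather than controlling the excess mass interval-by-interval, in \cite{ESY3} the compensation issue is handled by first establishing the two-sided local law $|m_N-m_{\mathrm{sc}}|\le\delta$ itself at scale $\eta$ (which already absorbs the upper bound), so that the gap event directly contradicts it; your decomposition into ``deficit from the hole'' plus ``excess from the edges'' is a legitimate alternative bookkeeping but requires a dyadic summation over scales to make the tail contributions add up.
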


With the bound (\ref{gapdec}) we are now ready to prove Theorem \ref{thm:weg}.

\begin{proof}[Proof of Theorem \ref{thm:weg}]
Let $\cN_\eps = \cN  [E- \eps/2N ; E + \eps /2N]$ denote the number of eigenvalues of $H$ in the interval $[E- \eps/2N ; E + \eps /2N]$. We observe that
\[ \begin{split} \cN_\eps = \; \sum_{\alpha=1}^N {\bf 1} (|\mu_\alpha - E| \leq \eps/2N) \lesssim  \sum_{\alpha=1}^N \frac{(\eps/N)^2}{(\mu_\alpha - E)^2 + (\eps/N)^2}  \lesssim \; \frac{\eps}{N} \,  \text{Im } \sum_{\alpha=1}^N \frac{1}{\mu_\alpha - E - i \frac{\eps}{N}}  \end{split} \]
where $\mu_1, \dots , \mu_N$ are the eigenvalues of $H$. Hence
\[  \cN_\eps \lesssim \frac{\eps}{N} \text{Im } \tr \, \frac{1}{H-E-i\frac{\eps}{N}} = \frac{\eps}{N} \text{Im } \sum_{j=1}^N  \left(\frac{1}{H-E-i\frac{\eps}{N}} \right) (j,j) \, . \]
To estimate the right hand side of the last equation, we use that, for any $z \in \bC \backslash \bR$,
\[ \frac{1}{H-z} (j,j) = \frac{1}{h_{jj} - z - \bba^{(j)} \cdot (B^{(j)}-z)^{-1} \bba^{(j)}} \]
where $\bba^{(j)} = (h_{j1}, \dots, h_{j,j-1}, h_{j,j+1}, \dots h_{jN})$ is the $j$-th row of $H$, after removing the diagonal entry $h_{jj}$, while $B^{(j)}$ is the $(N-1) \times (N-1)$ minor of $H$ obtained by removing the $j$-th row and the $j$-th column. We conclude that
\[ \cN_{\eps} \lesssim  \frac{\eps}{N} \sum_{j=1}^N  \frac{1}{\left| h_{jj} - E - i\frac{\eps}{N} - \bba^{(j)} \cdot (B^{(j)}-z)^{-1} \bba^{(j)} \right|} \, . \]
Therefore,
\[ \begin{split}  \P (\cN_\eps \geq 1) \leq \; &\E \, \cN_\eps^2 \\ \lesssim \; & \eps^2 \, \E \, \left[ \frac{1}{N} \sum_{j=1}^N \frac{1}{\left| h_{jj} - E - i\frac{\eps}{N} - \bba^{(j)} \cdot (B^{(j)}-z)^{-1} \bba^{(j)} \right|} \right]^2 \\ \leq \; & \frac{\eps^2}{N} \, \E \, \sum_{j=1}^N \frac{1}{\left| h_{jj} - E - i\frac{\eps}{N} - \bba^{(j)} \cdot (B^{(j)}-z)^{-1} \bba^{(j)} \right|^2}
\\ \leq \; &\eps^2 \, \E \, \frac{1}{\left| h_{11} - E - i\frac{\eps}{N} - \bba^{(1)} \cdot (B^{(1)}-z)^{-1} \bba^{(1)} \right|^2} \end{split} \]
where we used the convexity of $x \to x^2$ and, in the last line, the symmetry w.r.t. permutations of rows of $H$. Next, let $\lambda_1, \dots , \lambda_{N-1}$ and $\bu_1, \dots , \bu_{N-1}$ be the eigenvalues of the minor $B^{(1)}$ and the corresponding (normalized) eigenvectors. Moreover, let $\bb = (b_1, \dots ,b_{N-1}) = \sqrt{N} \bba^{(1)}$ (the factor $\sqrt{N}$ compensates for the scaling of the matrix entries introduced in Def. \ref{def}; the random variables $b_j$, $j=1, \dots , N-1$ are so that $\E \, b_j =0$ and $\E \, |b_j|^2 = 1/2$). We find
\begin{equation}\label{eq:weg1} \begin{split} \P (\cN_\eps \geq 1) \lesssim \; & \eps^2 \, \E \, \frac{1}{\left| h_{11} - E - i\frac{\eps}{N} - \frac{1}{N} \sum_{\alpha=1}^{N-1} \frac{|\bb \cdot \bu_\alpha|^2}{\lambda_\alpha - E - i \frac{\eps}{N}}\right|^2} \\ \leq \; & \eps^2 \, \E \, \frac{1}{\left(h_{11} -E - \sum_{\alpha=1}^{N-1} d_\alpha |\bb \cdot \bu_\alpha|^2 \right)^2 + \left( \frac{\eps}{N} + \sum_{\alpha=1}^{N-1} c_\alpha |\bb \cdot \bu_\alpha|^2 \right)^2} \end{split} \end{equation}
where we defined the coefficients
\begin{equation}\label{eq:cdalpha} c_\alpha = \frac{\eps}{N^2 (\lambda_\alpha - E)^2 + \eps^2} \qquad \text{and } \quad
d_\alpha  = \frac{N (\lambda_\alpha - E)}{N^2 (\lambda_\alpha - E)^2 + \eps^2} \, . \end{equation}
It is important to notice that the entries $b_j$, $j=1,\dots , N-1$ are independent of the eigenvalues $\lambda_\alpha$ and the eigenvectors $\bu_\alpha$ of $B^{(1)}$. For this reason, we can compute the expectation in \eqref{eq:weg1} by first averaging over the vector $\bb$, keeping the randomness associated with $B^{(1)}$ (in particular, the coefficients $c_\alpha$ and $d_\alpha$ defined in (\ref{eq:cdalpha}) and the eigenvectors $\bu_\alpha$) fixed.

We define $\Omega$ to be the event that at least six eigenvalues of $B^{(1)}$ are located outside the interval $[E-\eps /2N ; E+ \eps/2N]$. On the set $\Omega$ and on its complement $\Omega^c$, we derive different bounds for the expectation over the random vector $\bb$. We write
\begin{equation} \label{eq:weg-2} \begin{split}
 \P (\cN_\eps \geq 1) \lesssim \; & \eps^2 \, \E_B \, {\bf 1}_\Omega \, \E_\bb \, \frac{1}{\left(h_{11} -E - \sum_{\alpha=1}^{N-1} d_\alpha |\bb \cdot \bu_\alpha|^2 \right)^2 + \left( \frac{\eps}{N} + \sum_{\alpha=1}^{N-1} c_\alpha |\bb \cdot \bu_\alpha|^2 \right)^2}
\\ &+ \eps^2 \, \E_B \, {\bf 1}_{\Omega^c} \, \E_\bb \, \frac{1}{\left(h_{11} -E - \sum_{\alpha=1}^{N-1} d_\alpha |\bb \cdot \bu_\alpha|^2 \right)^2  + \left( \frac{\eps}{N} + \sum_{\alpha=1}^{N-1} c_\alpha |\bb \cdot \bu_\alpha|^2 \right)^2} \\
=: \; & \text{A} + \text{B}
\end{split} \end{equation}
where $\E_B$ denotes the expectation over the randomness associated with the minor $B^{(1)}$, while $\E_\bb$ denotes the expectation over the vector $\bb$.

In the exceptional set $\Omega^c$, we can find (since $N \geq 8$), indices $\beta_1, \beta_2 , \beta_3 \in \{ 1, \dots , N-1\}$ such that $\lambda_{\beta_j} \in [E - \eps/2N ; E + \eps / 2N]$ for $j=1,2,3$. Then $c_{\beta_j} > 1/(2\eps)$ for $j=1,2,3$. Therefore, the second term on the r.h.s. of (\ref{eq:weg-2}) is bounded by
\[ \begin{split}
\text{B}  \leq \; &\eps^2 \,  \E_B  \, {\bf 1}_{\Omega^c} \,  \E_\bb \, \frac{1}{\left( \sum_{j=1}^3 c_{\beta_j} |\bb \cdot \bu_{\beta_j}|^2 \right)^2} \\ \lesssim \; &\eps^4  \sup_{\bu_1, \bu_2, \bu_3} \, \E_\bb \, \frac{1}{\left( \sum_{j=1}^3  |\bb \cdot \bu_j|^2 \right)^2}
\end{split} \]
where the supremum is taken over all sets $\{ \bu_1, \bu_2 , \bu_3 \}$ of three orthonormal vectors in $\bC^{N-1}$. Lemma~\ref{lm:Exi} implies that $\text{B} \lesssim \eps^4$.

Next, we focus on the first term on the r.h.s. of (\ref{eq:weg-2}). On the set $\Omega$, we can define indices $\alpha_1, \dots , \alpha_6$ as follows. We fix $\alpha_1 \in \{ 1, \dots , N-1\}$ so that
\[ |\lambda_{\alpha_1} - E| = \inf \{ |\lambda_\alpha - E| : |\lambda_\alpha - E| \geq \eps /N \} \, . \]  Moreover, we choose recursively the indices $\alpha_2, \dots , \alpha_6 \in \{ 1, \dots , N-1\}$ by the formula
\[ |\lambda_{\alpha_j} - E| = \inf \{ |\lambda_\alpha - E| : |\lambda_\alpha - E| \geq \eps /N, \alpha_j \not = \alpha_i, \text{for all } 1 \leq i < j \} \, .\]
We define $\Delta = N |\lambda_{\alpha_6} - E|$. Then \[ \eps < N |\lambda_{\alpha_j} - E| \leq \Delta, \] for every $1 \leq j \leq 6$.  This implies that
\begin{equation}\label{eq:dalpha} |d_{\alpha_4}| > |d_{\alpha_5}| > |d_{\alpha_6}| > \frac{1}{2\Delta} \end{equation}
while
\begin{equation}\label{eq:calpha} c_{\alpha_1} > c_{\alpha_2} > c_{\alpha_3} > \frac{\eps}{2\Delta^2} \, . \end{equation}
{F}rom (\ref{eq:weg-2}), we conclude that
\begin{equation*} \begin{split}
\text{A} \lesssim\; & \eps^2 \, \E_{B} \, {\bf 1}_\Omega \, E_{\bb} \,  \frac{1}{(h_{11} -E-\sum_\alpha d_\alpha |\bb \cdot \bu_\alpha|^2)^2 + (\sum_{j=1}^3 c_{\alpha_j} |\bb \cdot \bu_{\alpha_j}|^2)^2} \\ \leq \; & \eps^2 \, \E_{B} \, {\bf 1}_\Omega \, \int \prod_{j=1}^{N-1}  db_j \, d \overline{b}_j  \, h (\text{Re } b_j) \, h( \text{Im } b_j)   \frac{1}{(h_{11} -E-\sum_\alpha d_\alpha |\bb \cdot \bu_\alpha|^2)^2 + (\sum_{j=1}^3 c_{\alpha_j} |\bb \cdot \bu_{\alpha_j}|^2)^2} \,. \end{split} \end{equation*}

Now we define
\[ F(t) = \int_{-\infty}^t ds \, \frac{1}{s^2 + \left( \sum_{j=1}^3 c_{\alpha_j} |\bb \cdot \bu_{\alpha_j} |^2 \right)^2} \]
Then we have
\begin{equation}\label{eq:Fbd} 0 \leq F(t) \leq \frac{1}{\sum_{j=1}^3 c_{\alpha_j} |\bb \cdot \bu_{\alpha_j}|^2} \end{equation}
for every $t \in \bR$. Moreover, with the notation $\sigma_j = 1$ if $\lambda_{\alpha_j} > E$ and $\sigma_j = -1$ if $\lambda_{\alpha_j} < E$, for $j =1, \dots ,6$, we find
\[ \begin{split}
\sum_{i=4}^6 \sigma_i (\bu_{\alpha_i} \cdot \bb) \sum_{\ell} \bu_{\alpha_i} (\ell) &\frac{d}{db_\ell} \, F(h_{11}-E-\sum_\alpha d_\alpha |\bb \cdot \bu_\alpha|^2) \\ &= \frac{ \sum_{i=4}^6 |d_{\alpha_i}| |\bb \cdot \bu_{\alpha_i}|^2}{(h_{11}-E-\sum_\alpha d_\alpha |\bb \cdot \bu_\alpha|^2)^2 + (\sum_{j=1}^3 c_{\alpha_j} |\bb \cdot \bu_{\alpha_j}|^2)^2} \end{split} \]
and therefore
\[ \begin{split}
\text{A} \lesssim \;  \eps^2 \sum_{i=4}^6 \, \E_B  \, {\bf 1}_\Omega \, \sigma_i  \int \prod_{j=1}^{N-1} db_j \, d \overline{b}_j \, &h (\text{Re } b_j) \, h(\text{Im } b_j) \,  \frac{(\bu_{\alpha_i} \cdot \bb)}{\sum_{i=4}^6 |d_{\alpha_i}| \, |\bb \cdot \bu_{\alpha_i}|^2}  \\ &\times \sum_{\ell} \bu_{\alpha_i} (\ell) \frac{d}{db_\ell} \, F(h_{11}-E-\sum_\alpha d_\alpha |\bb \cdot \bu_\alpha|^2) \,.\end{split} \]
Integration by parts gives
\begin{equation}\label{eq:2}  \begin{split}
\text{A} \lesssim \; & \eps^2 \, \E_B   \, {\bf 1}_\Omega \, \int \prod_{j=1}^{N-1} db_j \, d \overline{b}_j \,  h (\text{Re } b_j) \, h (\text{Im } b_j)  \frac{1}{\sum_{i=4}^6 |d_{\alpha_i}| \, |\bb \cdot \bu_{\alpha_i}|^2} \, F(h_{11}-E-\sum_\alpha d_\alpha |\bb \cdot \bu_\alpha|^2)  \\ &
+ \eps^2 \, \E_B  \, {\bf 1}_\Omega \,  \sum_{i=4}^6 \int \prod_{j=1}^{N-1} db_j \, d \overline{b}_j \, h (\text{Re } b_j) \, h (\text{Im } b_j) \, \left| \sum_{\ell} \bu_{\alpha_i} (\ell) \, \left(\frac{h' (\text{Re } b_\ell)}{h(\text{Re } b_\ell)} - i \frac{h' (\text{Im } b_\ell)}{h(\text{Im } b_\ell)} \right) \right|   \\ & \hspace{3cm} \times  \frac{ (\bu_{\alpha_i} \cdot \bb)}{\sum_{i=4}^6 |d_{\alpha_i}| \, |\bb \cdot \bu_{\alpha_i}|^2} \, F(h_{11}-E-\sum_\alpha d_\alpha |\bb \cdot \bu_\alpha|^2) \\ =\; & \text{I } + \text{II} \, . \end{split} \end{equation}
For the first term, we obtain, from (\ref{eq:Fbd}),
\begin{equation}\label{eq:Ibd} \begin{split} \text{I } \lesssim \; &\eps^2 \,  \E_B \, \frac{ {\bf 1}_\Omega }{(\min_{4 \leq j \leq 6} |d_{\alpha_j}|)(\min_{1\leq i \leq 3} c_{\alpha_i} )} \, \E_{\bb} \frac{1}{\sum_{j=4}^6 |\bb \cdot \bu_{\alpha_j}|^2} \frac{1}{\sum_{i=1}^3 |\bb \cdot \bu_{\alpha_i}|^2} \\ \lesssim \; & \eps \, \E_B \, {\bf 1}_\Omega  \, \Delta^3 \, \left( \E_{\bb} \frac{1}{ (\sum_{j=4}^6 |\bb \cdot \bu_{\alpha_j}|^2)^2} \right)^{1/2} \, \left( \E_{\bb}  \frac{1}{ (\sum_{i=1}^3 |\bb \cdot \bu_{\alpha_i}|^2)^2} \right)^{1/2} \\ \lesssim \; & \eps \,  \E_B  \, {\bf 1}_\Omega \, \Delta^3 \end{split}\end{equation}
where we used Lemma \ref{lm:Exi} and the assumption (\ref{eq:ass-weg}).

As for the second term on  the r.h.s. of (\ref{eq:2}) we find,  using H\"older's inequality,
\begin{equation}\label{eq:3} \begin{split}
\text{II } \leq \; &\eps^2 \sum_{i=5}^8 \E_B  \, {\bf 1}_\Omega \, \left( \int \prod_{j=1}^{N-1} db_j \, d \overline{b}_j \, h (\text{Re } b_j) \, h(\text{Im } b_j) \, \left| \sum_{\ell=1}^{N-1} \bu_{\alpha_i} (\ell) \, \left(\frac{h' (\text{Re } b_\ell)}{h(\text{Re } b_\ell)} - i \frac{h' (\text{Im } b_\ell)}{h(\text{Im } b_\ell)} \right) \right|^4 \right)^{1/4} \\ & \hspace{.2cm} \times \left( \int \prod_{j=1}^{N-1}  db_j \, d \overline{b}_j \, h (\text{Re } b_j) h (\text{Im } b_j) \, \frac{|\bu_{\alpha_i} \cdot \bb|^{4/3}}{\left(\sum_{i=4}^6 |d_{\alpha_i}| \, |\bb \cdot \bu_{\alpha_i}|^2\right)^{4/3}} \, \frac{1}{\left(\sum_{i=1}^3 c_{\alpha_i} |\bb\cdot \bu_{\alpha_i}|^2\right)^{4/3}} \right)^{3/4} \, . \end{split} \end{equation}
Now we observe that
\[ \begin{split}  \int \prod_{j=1}^{N-1} &db_j \, d \overline{b}_j \, h (\text{Re } b_j) h (\text{Im } b_j) \, \left| \sum_{\ell} \bu_{\alpha_i} (\ell) \, \left(\frac{h' (\text{Re } b_\ell)}{h(\text{Re } b_\ell)} - i \frac{h' (\text{Im } b_\ell)}{h(\text{Im } b_\ell)} \right) \right|^4 \\
&= \sum_{\ell_1,\ell_2, \ell_3, \ell_4 = 1}^{N-1} \bu_{\alpha_i} (\ell_1) \bu_{\alpha_i} (\ell_2) \overline{\bu}_{\alpha_i} (\ell_3) \overline{\bu}_{\alpha_i} (\ell_4) \\ & \hspace{.2cm} \times  \int  \prod_{j=1}^{N-1}  db_j \, d \overline{b}_j \, h (\text{Re } b_j) h (\text{Im } b_j) \left( \frac{h' (\text{Re } b_{\ell_1})}{h(\text{Re } b_{\ell_1})} - i\frac{h' (\text{Im } b_{\ell_1})}{h(\text{Im } b_{\ell_1})} \right) \, \left(  \frac{h' (\text{Re } b_{\ell_2})}{h(\text{Re } b_{\ell_2})} - i\frac{h' (\text{Im } b_{\ell_2})}{h(\text{Im } b_{\ell_2})} \right) \\ &\hspace{5cm} \times \left( \frac{h' (\text{Re } b_{\ell_3})}{h(\text{Re } b_{\ell_3})} + i\frac{h' (\text{Im } b_{\ell_3})}{h(\text{Im } b_{\ell_3})} \right) \, \left(  \frac{h' (\text{Re } b_{\ell_4})}{h(\text{Re } b_{\ell_4})} + i\frac{h' (\text{Im } b_{\ell_4})}{h(\text{Im } b_{\ell_4})} \right) \, .
\end{split} \]
Since $\int h' (s) ds = 0$, only terms with $\ell_1 = \ell_3$ and $\ell_2 = \ell_4$ or with $\ell_1 = \ell_4$ and $\ell_2= \ell_3$ do not vanish. Hence
\[ \begin{split}  \int \prod_{j=1}^{N-1} db_j \, d \overline{b}_j \, h (\text{Re } b_j) h (\text{Im } b_j) \,& \left| \sum_{\ell} \bu_{\alpha_i} (\ell) \, \left(\frac{h' (\text{Re } b_\ell)}{h(\text{Re } b_\ell)} - i \frac{h' (\text{Im } b_\ell)}{h(\text{Im } b_\ell)} \right) \right|^4
\\ &\lesssim  \left[ \left(\int ds \, h(s) \, \left(\frac{h'(s)}{h(s)}\right)^2 \right)^2 + \int ds \, h(s) \left( \frac{h' (s)}{h(s)} \right)^4 \right] \, .  \end{split} \]
Using the assumption (\ref{eq:ass-weg}), we conclude from (\ref{eq:3}) that
\[ \begin{split}
\text{II } \lesssim  \; &\eps \,  \E_B   \, {\bf 1}_\Omega \, \Delta^3  \left( \int \prod_{j=1}^{N-1} db_j \, d \overline{b}_j \, h (\text{Re } b_j) \, h (\text{Im } b_j)  \frac{1}{\left(\sum_{i=4}^6 \, |\bb \cdot \bu_{\alpha_i}|^2\right)^{2/3}} \, \frac{1}{\left(\sum_{i=1}^3 |\bb\cdot \bu_{\alpha_i}|^2\right)^{4/3}} \right)^{3/4} \\  \leq \; &\eps \, \E_B  \, {\bf 1}_\Omega  \, \Delta^3 \left( \E_{\bb} \frac{1}{(\sum_{i=4}^6 |\bb \cdot \bu_{\alpha_i}|^2)^2} \right)^{1/4} \left( \E_{\bb} \frac{1}{(\sum_{j=1}^3 |\bb \cdot \bu_{\alpha_j}|^2)^2} \right)^{1/2} \\ \leq \; &\eps \, \E_B  \, {\bf 1}_\Omega \, \Delta^3 \end{split} \]
where, in the last line, we used Lemma \ref{lm:Exi}. Last equation, combined with  (\ref{eq:Ibd}) and (\ref{eq:2}), implies that
\[ \text{A} \lesssim \eps \, \E_B  \, {\bf 1}_\Omega \, \Delta^3 \, . \]
{F}rom Theorem \ref{thm:gap}, we have
\[ \P (\Delta > K \text{ and } \Omega) \lesssim e^{-c \sqrt{K}} \, . \] Therefore, $\E  \, {\bf 1}_\Omega \, \Delta^3$ is finite (uniformly in $N$ and $\eps >0$) and we conclude that $\text{A} \lesssim \eps$, and therefore, from (\ref{eq:weg-2}), that
\[  \P (\cN_\eps \geq 1) \lesssim \eps \, . \]
\end{proof}

The next lemma is the main new ingredient compared with to the proof presented in \cite{ESY3}.

\begin{lemma}\label{lm:Exi}
Let $\bb = (b_1, \dots, b_{N-1}) \in \bC^{N-1}$, where $\{ \text{Re } b_j , \text{Im } b_j \}_{j=1}^{N-1}$ is a collection of $2(N-1)$ independent and identically distributed random variables with a common probability density function $h$ such that
\begin{equation}\label{eq:ass-h1} \int ds \,h(s) \,  \left(\frac{h'(s)}{h(s)} \right)^4  < \infty \, \qquad \text{and } \quad \int ds \,  s^4 \, h(s) < \infty \, . \end{equation}
Fix $r=1,2$, $m \in \bN$ with $m >r$. Then there exists a constant $C>0$ such that  \begin{equation}\label{eq:Exi} \E \; \frac{1}{\left(\sum_{j=1}^m |\bb \cdot \bu_j|^2 \right)^r} < C \end{equation} for any $N \geq 2$ and for any set of $m$ orthonormal vectors $\bu_1, \dots , \bu_m \in \bC^{N-1}$ .
\end{lemma}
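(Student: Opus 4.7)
I would first reduce the lemma to a uniform small-ball estimate. Identifying $\bb$ with the real vector $\vec b \in \bR^{2(N-1)}$ obtained by stacking $\text{Re }b_\ell$ and $\text{Im }b_\ell$, the hermitian orthonormality of $\bu_1,\dots,\bu_m$ is equivalent to the real orthonormality of the $2m$ linear functionals $\text{Re}(\bb\cdot\bu_j), \text{Im}(\bb\cdot\bu_j)$, $j=1,\dots,m$, in $\bR^{2(N-1)}$. Writing the corresponding linear map as $A \in \bR^{2m\times 2(N-1)}$, one has $\sum_{j=1}^m|\bb\cdot\bu_j|^2 = \|A\vec b\|^2$ with $AA^T = I_{2m}$. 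By the layer-cake representation,
\[
\E\|A\vec b\|^{-2r} \;=\; 2r\int_0^\infty s^{-2r-1}\,\P(\|A\vec b\|\leq s)\,ds,
\]
so the lemma follows from a small-ball bound $\P(\|A\vec b\|\leq s)\leq C s^{2m}$ uniform in $s>0$ and in $A$; integrability near $s=0$ uses $2m > 2r$, and the tail is immediate from $\P\leq 1$.

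\medskip

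\noindent\textbf{Key step.} I would obtain the small-ball bound by showing that the density of $Y:=A\vec b$ is uniformly bounded near the origin; this yields $\P(\|Y\|\leq s)\leq \|\rho_Y\|_\infty\cdot|B_s^{2m}|\leq Cs^{2m}$. Fourier inversion gives
\[
\|\rho_Y\|_\infty \;\leq\; (2\pi)^{-2m}\int_{\bR^{2m}}\prod_{\ell=1}^{2(N-1)}|\hat h((A^T\eta)_\ell)|\,d\eta,
\]
so, denoting by $a_\ell\in\bR^{2m}$ the $\ell$-th column of $A^T$, the task is to bound $\int\prod_\ell|\hat h(a_\ell\cdot\eta)|\,d\eta$ uniformly subject to the tight-frame identity $\sum_\ell a_\ell a_\ell^T = I_{2m}$ (which is a restatement of $AA^T = I_{2m}$).

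\medskip

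\noindent\textbf{Input and main obstacle.} From (\ref{eq:ass-h1}) I extract: by Cauchy-Schwarz in the integrand, the Fisher information $\int(h')^2/h$ is finite, so $\sqrt h\in H^1(\bR)$; the 1D Sobolev embedding then gives $h\in L^\infty$, whence $\int(h')^2 = \int(h')^2/h\cdot h \leq \|h\|_\infty\cdot I(h) <\infty$, i.e.\ $h\in H^1$, and hence $\hat h\in L^1$ by the weighted Cauchy-Schwarz bound $\int|\hat h|\leq \|(1+\xi^2)^{1/2}\hat h\|_{L^2}\|(1+\xi^2)^{-1/2}\|_{L^2}$. The moment condition $\int s^4 h<\infty$ together with nondegeneracy of $h$ (via the second-order Taylor expansion of $\hat h$ at $0$ and Riemann-Lebesgue at infinity) produces the quantitative pointwise estimate $|\hat h(\xi)|^2\leq e^{-c\min(\xi^2,1)}$. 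The main technical obstacle is combining this pointwise estimate with the frame identity $\sum_\ell (a_\ell\cdot\eta)^2 = \|\eta\|^2$ to obtain a uniform bound on $\int\prod_\ell|\hat h(a_\ell\cdot\eta)|\,d\eta$: a naive change of variables to $2m$ well-chosen columns produces a factor $\binom{2(N-1)}{2m}^{1/2}$ that grows with $N$, and the correct argument should instead employ a Brascamp-Lieb-type inequality with weights $c_\ell = \|a_\ell\|^2$ (which sum to $2m$ by the trace of $AA^T = I_{2m}$), coupled with the interpolation $\|\hat h\|_{L^{1/c_\ell}}\leq \|\hat h\|_{L^1}^{c_\ell}$ that follows from $\|\hat h\|_\infty = 1$, producing a bound depending only on $\|\hat h\|_{L^1}$ and $m$.
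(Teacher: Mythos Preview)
Your plan differs substantially from the paper's. The paper works directly in physical space: after changing to coordinates $z=U^*\bb$ and writing $\bv=(z_1,\dots,z_m)$, it exploits the divergence identity
\[
\sum_{\ell=1}^m\Big(\partial_{\mathrm{Re}\,z_\ell}\tfrac{\mathrm{Re}\,z_\ell}{|\bv|^{2r}}+\partial_{\mathrm{Im}\,z_\ell}\tfrac{\mathrm{Im}\,z_\ell}{|\bv|^{2r}}\Big)=\frac{2(m-r)}{|\bv|^{2r}}
\]
and integrates by parts. The boundary term on $\{|\bv|=\kappa\}$ vanishes as $\kappa\to0$ because $m>r$, while the bulk terms carry the score $h'/h$, which is controlled by $\int(h'/h)^4h<\infty$ (and $\int s^4h<\infty$ handles the $\delta$-regularisation). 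This produces a self-referencing inequality $\E|\bv|^{-2r}\lesssim\alpha+\alpha^{-1}\E|\bv|^{-2r}$, closed by choosing $\alpha$ large. Your route through the layer-cake reduction and a uniform density bound for the projection $A\vec b$ is conceptually clean and links to known results on marginals of product measures; the reduction to a small-ball estimate and your extraction of $\hat h\in L^1$ and $|\hat h(\xi)|^2\le e^{-c\min(\xi^2,1)}$ from the hypotheses are both correct.

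The gap is in your last sentence. Applying geometric Brascamp--Lieb with unit vectors $\hat a_\ell=a_\ell/\|a_\ell\|$ and weights $c_\ell=\|a_\ell\|^2$ gives
\[
\int_{\bR^{2m}}\prod_\ell|\hat h(a_\ell\cdot\eta)|\,d\eta\;\le\;\prod_\ell\Big(\int_\bR|\hat h(\|a_\ell\|t)|^{1/c_\ell}\,dt\Big)^{c_\ell}
=\prod_\ell c_\ell^{-c_\ell/2}\Big(\int_\bR|\hat h|^{1/c_\ell}\Big)^{c_\ell},
\]
and your interpolation $\int|\hat h|^{1/c_\ell}\le\|\hat h\|_{L^1}$ leaves the Jacobian factor $\prod_\ell c_\ell^{-c_\ell/2}$ uncontrolled. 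This factor is \emph{not} uniform in $N$: if the frame mass is spread equally over $L=2(N-1)$ columns, then $c_\ell=2m/L$ and $\prod_\ell c_\ell^{-c_\ell/2}=(L/2m)^m$. The fix is already in your toolkit: from $|\hat h(\xi)|^2\le e^{-c\min(\xi^2,1)}$ you obtain $\int|\hat h|^{1/c_\ell}\le C\sqrt{c_\ell}+e^{-c'/c_\ell}\|\hat h\|_{L^1}$, and the $\sqrt{c_\ell}$ exactly cancels the Jacobian, yielding a bound $\le C^{2m}$ uniform in $N$. Alternatively, bound $\|\rho_Y\|_\infty$ in physical space by applying Ball's inequality to the fiber integral (this is essentially the Rudelson--Vershynin projection density bound), which needs only $\|h\|_\infty<\infty$ --- a fact you already established.
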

\begin{proof}
{F}rom the monotone convergence theorem, we obtain
\[ \begin{split}
\E \; \frac{1}{\left(\sum_{j=1}^m |\bb \cdot \bu_j|^2 \right)^r} = \; & \lim_{\delta \to 0} \int
\left[ \prod_{i=1}^{N-1} db_i d \overline{b}_i \, h (\text{Re } b_i) h(\text{Im } b_i) \right] e^{-\delta \sum_{\ell=1}^{N-1} |b_\ell|^2} \, \frac{1}{\left(\sum_{j=1}^m |\bb \cdot \bu_j|^2 \right)^r} \\
= \; & \lim_{\delta \to 0^+} \int
\left[ \prod_{i=1}^{N-1} db_i d \overline{b}_i \, h_\delta (\text{Re } b_i) h_\delta (\text{Im } b_i) \right] \,\frac{1}{\left(\sum_{j=1}^m |\bb \cdot \bu_j|^2 \right)^r}
\end{split} \]
where we introduced the notation $h_\delta (x) = h (x) e^{-\delta x^2}$. Observe that
\begin{equation}\label{eq:heps} 0 \leq h_\delta (x) \leq C e^{-\delta x^2} \, . \end{equation}
This follows from (\ref{eq:ass-h1}) because
\[ \| h \|_{\infty} = \| \sqrt{h} \|^2_{\infty} \lesssim \| \sqrt{h} \|^2_{H^1}  = \int h (s) ds + \frac{1}{4} \int \frac{(h' (s))^2}{h(s)} ds  \lesssim 1+  \int \left( \frac{h' (s)}{h(s)} \right)^4 \, h(s) ds \, \]
where, in the last step, we used the fact that $h (s) ds$ is a probability measure. Using again the monotone convergence theorem, we obtain
\[ \E \; \frac{1}{\left(\sum_{j=1}^m |\bb \cdot \bu_j|^2 \right)^r} =  \lim_{\delta \to 0^+} \lim_{\kappa \to 0^+} \int \left[
\prod_{i=1}^{N-1} db_i d \overline{b}_i \, h_\delta (\text{Re } b_i) h_\delta (\text{Im } b_i) \right] \,\frac{{\bf 1} (\sum_{j=1}^m |\bb \cdot \bu_j|^2 \geq \kappa^2 )}{\left(\sum_{j=1}^m |\bb \cdot \bu_j|^2  \right)^r} \, .
\]
We complete now $\bu_1, \dots ,\bu_m$ to an orthonormal basis $\bu_1, \dots , \bu_{N-1}$ of $\bC^{N-1}$, and we introduce new coordinates $z_\alpha = \bu_\alpha \cdot \bb$, for $\alpha = 1, \dots, N-1$. We also use the notation $\bv = (z_1, \dots, z_m) \in \bC^m$, $\bw = (z_{m+1}, \dots, z_{N-1}) \in \bC^{N-m-1}$, and $\bz = (\bv, \bw) = (z_1, \dots , z_{N-1}) \in \bC^{N-1}$ and we denote $U$ the unitary $(N-1) \times (N-1)$ matrix, with columns $\bu_1, \dots , \bu_{N-1}$ (so that $\bz= U^* \bb$). We have
\begin{equation}\label{eq:Exi2}
\E \; \frac{1}{\left(\sum_{j=1}^m |\bb \cdot \bu_j|^2 \right)^r} =  \lim_{\delta \to 0^+} \lim_{\kappa \to 0^+} \int  d\bz d \overline{\bz} \, \left[ \prod_{i=1}^{N-1} h_\delta (\text{Re } (U\bz)_i) h_\delta (\text{Im } (U\bz)_i) \right] \,\frac{{\bf 1} (|\bv| \geq \kappa)}{|\bv|^{2r}} \, .
\end{equation}
We observe that
\[ \sum_{\ell = 1}^m \left[  \frac{d}{d\text{Re } z_\ell} \, \frac{\text{Re } z_\ell}{
|\bv|^{2r}} + \frac{d}{d\text{Im } z_\ell} \, \frac{\text{Im } z_\ell}{|\bv|^{2r}} \right] = \frac{2(m-r)}{|\bv|^{2r}} \, .  \]
Therefore we have
\begin{equation}\label{eq:zmw} \begin{split}
 \int  d\bz d \overline{\bz} \, & \left[ \prod_{i=1}^{N-1}  \, h_\delta (\text{Re } (U\bz)_i) h_\delta (\text{Im } (U\bz)_i)  \right] \,\frac{{\bf 1} (|\bv| \geq \kappa)}{|\bv|^{2r}} \\ = \; & \frac{1}{2(m-r)}\sum_{\ell=1}^m \int d\bw d \overline{\bw} \\ &\hspace{.5cm} \times \int_{|\bv| \geq \kappa}  d\bv d \overline{\bv} \left[ \prod_{i=1}^{N-1} h_\delta (\text{Re } (U\bz)_i) h_\delta (\text{Im } (U\bz)_i) \right] \left[ \frac{d}{d\text{Re } z_\ell} \, \frac{\text{Re } z_\ell}{ |\bv|^{2r}} + \frac{d}{d\text{Im } z_\ell} \, \frac{\text{Im } z_\ell}{ |\bv|^{2r}} \right]
 \\ = \; & \frac{1}{2(m-r)} \sum_{\ell=1}^m \int d\bw d \overline{\bw} \\ &\hspace{.5cm}\times \int_{|\bv| \geq \kappa}  d\bv d \overline{\bv}  \left[ \frac{d}{d \text{Re } z_\ell} \,\left(\frac{\text{Re } z_\ell}{ |\bv|^{2r}}  \prod_{i=1}^{N-1} h_\delta (\text{Re } (U\bz)_i) h_\delta (\text{Im } (U\bz)_i)\right)   \right. \\ & \hspace{4cm} \left. +  \frac{d}{d \text{Im } z_\ell}\, \left(\frac{\text{Im } z_\ell}{ |\bv|^{2r}}\prod_{i=1}^{N-1} h_\delta (\text{Re } (U\bz)_i) h_\delta (\text{Im } (U\bz)_i) \right)\right] \\
&- \frac{1}{2(m-r)} \sum_{\ell =1}^m \int d\bw d \overline{\bw} \int_{|\bv| \geq \kappa}  d\bv d \overline{\bv} \, \prod_{i=1}^{N-1} h_\delta (\text{Re } (U\bz)_i) h_\delta (\text{Im } (U\bz)_i) \\ & \hspace{3cm} \times \frac{\text{Re } z_\ell}{ |\bv|^{2r}} \left( \sum_{j=1}^{N-1} \text{Re } U_{\ell j} \frac{h'_\delta (\text{Re } (U\bz)_j)}{h_\delta (\text{Re } (U\bz)_j)} + \text{Im } U_{\ell j} \frac{h'_\delta (\text{Im } (U\bz)_j)}{h_\delta (\text{Im } (U\bz)_j)} \right)  \\
&- \frac{1}{2(m-r)} \sum_{\ell =1}^m \int d\bw d \overline{\bw} \int_{|\bv| \geq \kappa}  d\bv d \overline{\bv} \, \prod_{i=1}^{N-1} h_\delta (\text{Re } (U\bz)_i) h_\delta (\text{Im } (U\bz)_i) \\ & \hspace{3cm} \times  \frac{\text{Im } z_\ell}{ |\bv|^{2r}} \left( \sum_{j=1}^{N-1} \text{Im } U_{\ell j} \frac{h'_\delta (\text{Re } (U\bz)_j)}{h_\delta (\text{Re } (U\bz)_j)} - \text{Re } U_{\ell j} \frac{h'_\delta (\text{Im } (U\bz)_j)}{h_\delta (\text{Im } (U\bz)_j)} \right) \\
= \; & \text{I} + \text{II} + \text{III} \, .  \end{split} \end{equation}
Using Gauss's Divergence Theorem and (\ref{eq:heps}), the first term is bounded by
 \begin{equation}\label{eq:Exi-I} \begin{split}
\text{I}   & =  \frac{1}{2(m-r)} \int d\bw d \overline{\bw}  \int_{|\bv| = \kappa} d\bv d\overline{\bv}\,  \frac{1}{ |\bv|^{2r-1}}\, \prod_{\alpha=1}^{N-1} h_\delta (\text{Re } (U\bz)_j) h_\delta (\text{Im } (U\bz)_j)   \\ & \leq C^N \kappa^{-2r+1} \int d\bw d \overline{\bw} \,  e^{-\delta \bw^2} \int_{|\bv| = \kappa} d\bv d\overline{\bv}
\\ & \leq C^N \kappa^{2(m-r)} \int d\bw d \overline{\bw} \,  e^{-\delta \bw^2}  \\ &\leq C_{N,\delta} \, \kappa^{2(m-r)}
 \end{split} \end{equation}
where $C_{N,\delta}$ is a constant depending on $N$ and $\delta$ (and also $m,r$). Since $m >r$, this contribution vanishes in the limit $\kappa \to 0$.

\medskip

For $r=1$, the second term on the r.h.s. of (\ref{eq:zmw}) is bounded, for any constant $\alpha >0$, by
\[ \begin{split}
|\text{II}| \leq \; & \alpha \sum_{\ell =1}^m \int d\bz d \overline{\bz} \, \prod_{i=1}^{N-1} h_\delta (\text{Re } (U\bz)_i) h_\delta (\text{Im } (U\bz)_i) \\ &\hspace{2cm} \times \left( \sum_{j=1}^{N-1} \text{Re } U_{\ell j} \, \frac{h'_\delta (\text{Re } (U\bz)_j)}{h_\delta (\text{Re } (U\bz)_j)} + \text{Im } U_{\ell j} \, \frac{h'_\delta (\text{Im } (U\bz)_j)}{h_\delta (\text{Im } (U\bz)_j)} \right)^2 \\ &+  \alpha^{-1} \sum_{\ell =1}^m \int d\bz d \overline{\bz} \, \prod_{i=1}^{N-1} h_\delta (\text{Re } (U\bz)_i) h_\delta (\text{Im } (U\bz)_i) \frac{(\text{Re } z_\ell)^2}{|\bv|^4} \\  \lesssim \; &\alpha \sum_{\ell=1}^m \sum_{j_1, j_2 =1}^{N-1}\int d\bb d\overline{\bb} \prod_{i=1}^{N-1} h_\delta (\text{Re } b_i) h_\delta (\text{Im } b_i) \\ &\hspace{2cm} \times \left( \text{Re } U_{\ell j_1} \, \frac{h'_\delta (\text{Re } (U\bz)_{j_1})}{h_\delta (\text{Re } (U\bz)_{j_1})} + \text{Im } U_{\ell j_1} \, \frac{h'_\delta (\text{Im } (U\bz)_{j_1})}{h_\delta (\text{Im } (U\bz)_{j_1})} \right) \\ & \hspace{4cm} \times  \left( \text{Re } U_{\ell j_2} \, \frac{h'_\delta (\text{Re } (U\bz)_{j_2})}{h_\delta (\text{Re } (U\bz)_{j_2})} + \text{Im } U_{\ell j_2} \, \frac{h'_\delta (\text{Im } (U\bz)_{j_2})}{h_\delta (\text{Im } (U\bz)_{j_2})} \right) \\ &+  \alpha^{-1} \sum_{\ell =1}^m \int d\bz d \overline{\bz} \, \prod_{i=1}^{N-1} h_\delta (\text{Re } (U\bz)_i) h_\delta (\text{Im } (U\bz)_i) \, \frac{(\text{Re } z_\ell)^2}{|\bv|^4}
\\
\lesssim \; & \alpha  \int  \, \left( \frac{h'_\delta (s)}{h_\delta (s)} \right)^2 h_\delta (s) ds  +  \alpha^{-1} \, \E \frac{1}{\sum_{j=1}^m |\bb \cdot \bu_j|^2 } \, .  \end{split}\]
In the last line, we used the fact that terms with $j_1 \not = j_2$ vanish (because $\int h'_{\delta} (s) ds = 0$) (note that $A \lesssim B$ means here that $A \leq c B$, where the constant $c$ may depend only on $m$ and $r$).

For $r=2$, the second term in (\ref{eq:zmw}) can be bounded similarly by
\[  \begin{split}
|\text{II}| \leq \; & \alpha \sum_{\ell =1}^m \int d\bz d \overline{\bz} \, \prod_{i=1}^{N-1} h_\delta (\text{Re } (U\bz)_i) h_\delta (\text{Im } (U\bz)_i) \\ &\hspace{2cm} \times  \left( \sum_{j=1}^{N-1} \text{Re } U_{\ell j} \frac{h'_\delta (\text{Re } (U\bz)_j)}{h_\delta (\text{Re } (U\bz)_j)} + \text{Im } U_{\ell j} \frac{h'_\delta (\text{Im } (U\bz)_j)}{h_\delta (\text{Im } (U\bz)_j)} \right)^4 \\ &+  \alpha^{-1} \sum_{\ell =1}^m \int d\bz d \overline{\bz} \, \prod_{i=1}^{N-1} h_\delta (\text{Re } (U\bz)_i) h_\delta (\text{Im } (U\bz)_i) \frac{|\text{Re } z_\ell |^{4/3}}{|\bv|^{16/3}} \\  \lesssim \; & \alpha \, \left[ \left(  \int  \left(\frac{h'_\delta (s)}{h_\delta (s)} \right)^2 h_\delta (s) ds \right)^2 + \int \, \left(\frac{h'_\delta (s)}{h_\delta (s)} \right)^4 h_\delta (s) ds \right]  +  \alpha^{-1} \, \E \frac{1}{\left(\sum_{j=1}^m |\bb \cdot \bu_j|^2 \right)^2 } \, .  \end{split}\]
Observe that, from $h_\delta (s) = e^{-\delta s^2} h (s)$, we have $h'_\delta (s) = -2\delta s e^{-\delta s^2} h(s) + e^{-\delta s^2} h' (s)$ and therefore
\[ \frac{h'_\delta (s)}{h_\delta (s)} = -2\delta s + \frac{h' (s)}{h(s)}\, . \]
This implies that, for any $p>0$,
\[ \int \left( \frac{h'_\delta (s)}{h_\delta (s)} \right)^p \, h_\delta (s) ds \lesssim \delta^p \int s^p h(s) ds + \int  \left( \frac{h' (s)}{h (s)} \right)^p \, h (s) ds \, . \]
{F}rom (\ref{eq:ass-h1}), we have that, for $r=1,2$ and $m\in \bN$ with $m >r$, and or any $\alpha >0$,
\[ | \text{II} | \lesssim \alpha (1 + \delta^4) + \alpha^{-1} \E \, \frac{1}{\left(\sum_{j=1}^m |\bb \cdot \bu_j|^2 \right)^r} \, .\]
The third term on the r.h.s. of (\ref{eq:zmw}) can be bounded in exactly the same way. Hence, with (\ref{eq:Exi-I}) and (\ref{eq:Exi2}), we find
\[ \begin{split}
\E \; \frac{1}{\left(\sum_{j=1}^m |\bb \cdot \bu_j|^2 \right)^r} \lesssim \alpha  + \alpha^{-1} \, \E \; \frac{1}{\left(\sum_{j=1}^m |\bb \cdot \bu_j|^2 \right)^r} \, . \end{split} \]
Choosing $\alpha >0$ sufficiently large, we find
\[ \E \; \frac{1}{\left(\sum_{j=1}^m |\bb \cdot \bu_j|^2 \right)^r} \lesssim 1 \, . \]
\end{proof}

\thebibliography{hhhh}

\bibitem{Dy} Dyson, F.J.: A Brownian-motion model for the eigenvalues
of a random matrix. {\it J. Math. Phys.} {\bf 3}, 1191-1198 (1962).

\bibitem{ESY1} Erd{\H o}s, L., Schlein, B., Yau, H.-T.:
Semicircle law on short scales and delocalization
of eigenvectors for Wigner random matrices.
{\it Ann. Probab.} {\bf 37}, No. 3, 815--852 (2009)

\bibitem{ESY2} Erd{\H o}s, L., Schlein, B., Yau, H.-T.
Local semicircle law and complete delocalization for
Wigner random matrices.{\it  Comm. Math. Phys.} {\bf 287}, No. 2, 641Ð655 (2009).

\bibitem{ESY3} Erd{\H o}s, L., Schlein, B., Yau, H.-T.:
Wegner estimate and level repulsion for Wigner random matrices.
{\it Int. Math. Res. Notices.} {\bf 2010}, No. 3, 436-479 (2010)

\bibitem{ESY4} Erd{\H o}s, L., Schlein, B., Yau, H.-T.:
Universality of random matrices and local relaxation ßow. Preprint arxiv.org/abs/0907.5605.

\bibitem{EPRSY}
Erd\H{o}s, L., P\'ech\'e, S., Ram\'irez, J.,  Schlein, B. and Yau, H.-T.: Bulk
universality
for Wigner matrices. {\it Commun. Pure  Applied Math.}
{\bf 63},  895-925, (2010).

\bibitem{ERSTVY}
Erd\H{o}s, L., Ram\'irez, J.,  Schlein, B., Tao, T., Vu, V. and Yau, H.-T.:
Bulk universality for Wigner Hermitian matrices with subexponential decay.
To appear in Math. Res. Letters. Preprint arXiv:0906.4400

\bibitem{ESYY} Erd{\H o}s, L., Schlein, B., Yau, H.-T., Yin, J.:
The local relaxation flow approach to universality of the local
statistics for random matrices. Preprint arXiv:0911.3687.

\bibitem{EYY} Erd\H{o}s, L., Yau, H.-T., Yin, J.:
Universality for generalized Wigner matrices with Bernoulli distribution.
Preprint arXiv:1003.3813.

\bibitem{J} Johansson, K.: Universality of the local spacing
distribution in certain ensembles of Hermitian Wigner matrices.
{\it Comm. Math. Phys.} {\bf 215} (2001), no.3. 683--705.

\bibitem{S} Soshnikov, A.: Universality at the edge of the spectrum in
Wigner random matrices. {\it  Comm. Math. Phys.} {\bf 207} (1999), no.3.
697-733.

\bibitem{TV} Tao, T. and Vu, V.: Random matrices: Universality of the
local eigenvalue statistics. Preprint arXiv:0906.0510.

\bibitem{TV2} Tao, T. and Vu, V.: Random covariance matrices: Universality of local statistics of eigenvalues. Preprint 0912.0966.

\bibitem{TW} Tracy, C. A. and Widom, H.: Level-spacing distributions and the Airy kernel. {\it Comm. in Math. Phys.} {\bf 159} (1994), no. 1, 151Ð174.

\bibitem{W} Wigner, E.: Characteristic vectors of bordered matrices with infinite dimensions. {\it Ann. of Math.} {\bf 62} (1955), 548-564.

\end{document}